\DeclareMathAlphabet{\mathpzc}{OT1}{pzc}{m}{it}
\begin{document}
	\newcommand{\etal} {{\it et al.}\xspace}

	\title{Large $k$-gons in a 1.5D Terrain}
	
	\author{Vahideh Keikha}
	
	\institute{Institute of Computer Science, the Czech Academy of Sciences, Department of Applied Mathematics, Charles
		University, Prague, Czech Republic, \email{keikha@cs.cas.cz}
	}
	\authorrunning{V.Keikha}
	\maketitle 

	\begin{abstract}
		Given is a 1.5D terrain $\mathcal{T}$, i.e., an $x$-monotone polygonal chain in $\mathbb{R}^2$. For a given $2\le k\le n$, our objective is to approximate the largest area or perimeter convex polygon of exactly or at most $k$ vertices inside $\mathcal{T}$. For a constant $k>3$, we design an FPTAS that efficiently approximates the largest convex polygons with at most $k$ vertices, within a factor $(1-\epsilon)$. 
		For the case where $k=2$, we design an $O(n)$ time exact algorithm for computing the longest line segment in $\mathcal{T}$, and for $k=3$, we design an $O(n \log n)$ time exact algorithm for computing the largest-perimeter triangle that lies within $\mathcal{T}$.

	\end{abstract}

	\section{Introduction}
	Recently, much attention is given to problems of the following form: Given is
	a set  $\mathcal{P}$ of complexity $n$ and a parameter $k \le n$. The objective is computing a  subset of size $k$ of  $\mathcal{P}$ 
	that optimizes some cost function among all choices. 
	
	A well-studied variant is ``potato peeling''~\cite{goodman1981largest} also known as { ``convex skull''} \cite{woo1986convex}, which assumes the input set $\mathcal{P}$ is an arbitrary simple polygon of $n$ vertices, and seeks the inscribed object having the largest area or perimeter in $\mathcal{P}$.  
	It is shown that computing an inscribed convex polygon having the largest area in $\mathcal{P}$ is  solvable in $O(n^7)$ time~\cite{ChangY86}. 
	In the same paper, an $O(n^6)$ time algorithm is also presented for computing the largest-perimeter inscribed convex polygon. 
	There also exists a 4-approximation algorithm for the largest-area contained convex polygon in $\mathcal{P}$ with running time $O(n\log n)$~\cite{hall2006finding}. 
	In the same paper, the authors also discussed several approximations and randomized algorithms for the other variants of the problem, including the largest-area rectangle, ellipse, and the $\delta$-fat triangle in $\mathcal{P}$, where an  $\delta$-fat triangle is a triangle at which all three angles are at least $\delta$. 
	In~\cite{CabelloCKSV17},  computing the largest inscribed convex polygon is studied for both the area and the perimeter measures, and the authors introduced a   randomized  $(1- \epsilon)$-approximation algorithm that gives this result  with probability at least $2/3$,  and runs  in $O(n(\log^2 n+(\frac {1} {\epsilon^3})\log n+ \frac {1}{\epsilon^4}))$ time. 
	
	The exact algorithm for computing the largest area triangle that is inscribed in $\mathcal {P}$ is also studied in \cite{MelissaratosS90}. The authors presented an $O(n^4)$ time algorithm that also works for the perimeter measure keeping the same running time. 
	
	In~\cite{DanielsMR97}, an $O(n \log^2 n)$ time algorithm is presented for computing the largest area axis-aligned rectangle in a simple polygon, and it is shown that this problem has an $\Omega (n \log n)$ lower bound. See~\cite{karmakar2016construction} and the references therein for the results on the potato peeling problem in higher dimensions. 
	
	\begin{figure}[t]
		\centering
		\includegraphics[scale=0.75]{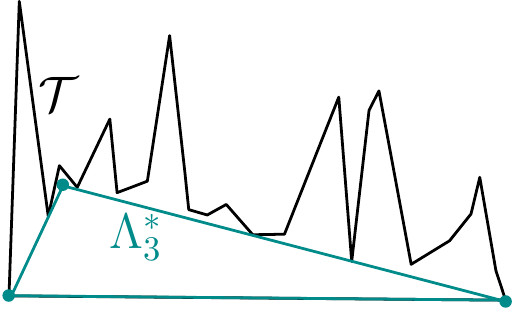}
		\caption{ A terrain and its largest-perimeter triangle  $\Lambda^*_3$. In this example,   $\Lambda^*_4$ is the same as  $\Lambda^*_3$. Because any convex polygon in $\mathcal{T}$ of strictly four vertices has a perimeter smaller than the perimeter of $\Lambda^*_3$. }
		\label{fig:def1}
	\end{figure}
	
	\subsubsection{Motivation} 
	There are several applications for different variants of the potato peeling problems. One of the most important applications is {\em collision detection}~\footnote{also known as {\em interference detection} or {\em contact determination}.} with the aim of finding the biggest object of bounded complexity which could be moved within an environment without being collapsed~\cite{lin1998collision}. Computing the biggest static or moving object in an environment is considered a major computational bottleneck. 
	The other application is {\em geometric shape approximation} which is the act of approximating one geometric shape with another simpler object, see, e.g.,~\cite{alt1990approximation} which gives several approximations of a convex polygon with rectangles, circles, and polygons with fewer edges. See also~\cite{luebke2001developer} for a survey on different applications and approaches to these problems.

	\subsubsection*{Related Work}
	The longest line segment inside a simple polygon of $n$ vertices can be computed in $O(n^{1.99})$ time~\cite{chazelle1990algorithm} and can be approximated within a factor $(1-\epsilon)$ in $O(n \log^2 n)$ time~\cite{hall2006finding}. 
	As mentioned above, the best-known algorithms for computing the largest area and the largest perimeter triangle inside a simple polygon take $O(n^4)$ time~\cite{MelissaratosS90}. However, these problems seem to be easier if the input is a terrain.  
	It is shown that the largest-area triangle that is inscribed in a 1.5D terrain $\mathcal{T}$ of $n$ vertices can be computed in $O(n^2)$ time~\cite{das2021largest}. This algorithm recently improved to $O(n \log n)$ time~\cite{cab}.  
	In~\cite{das2021largest}, a $2$-approximation algorithm and an FPTAS with running times $O(n \log n)$ and $O(\frac n \epsilon  \log ^2 n)$, respectively, are designed for computing the largest-area triangle. 
	The largest area axis-aligned rectangle inside  $\mathcal{T}$ can also be computed in $O(n)$ time~\cite{daniels1995finding} since any terrain can be considered as a vertically separated horizontally convex polygon~\cite{daniels1995finding}. 

	\subsubsection*{Contribution}
	In the following, we formally define our problems. 
	\begin{problem} \label{prob:triangle}
		Given is a 1.5D terrain $\mathcal{T}$, compute the largest perimeter triangle in~$\mathcal{T}$.  
	\end{problem} 
	\begin{problem} \label{prob:kgon}
		Given is a 1.5D terrain $\mathcal{T}$ and a constant integer $2 \le k<n$, compute a largest perimeter/area polygon of at most $k$ vertices in $\mathcal{T}$.   
	\end{problem}
	See Fig.~\ref{fig:def1} for an illustration. For the case where $k=2$, we seek the longest line segment inside the terrain and refer to it as the diameter of the terrain. We show that the diameter of the terrain can be computed in $O(n)$ time.  
	For $k=3$, we design an $O(n \log n)$ time exact algorithm for the largest perimeter contained triangle in $\mathcal{T}$, that matches the result for the area measure~\cite{cabello2016finding}. We then give a $(1-\epsilon)$-approximation for Problem~\ref{prob:kgon}, i.e.,  computing the largest perimeter polygon of at most $k>3$ vertices.  
	Our results are successfully extended to the area measure.  See  Table~\ref{table:results} for a summary of the new and known results. 

	\subsubsection*{Preliminaries}
	Let $\mathcal{T}$ be a 1.5D terrain of $n$ vertices, and let $\mathcal{B}$ be the base of the terrain. Let $\Lambda^*_k$ denote a largest perimeter polygon of at most $k$ vertices inside $\mathcal{T}$. 
	W.l.o.g, suppose $\mathcal{B}$ is always a horizontal segment, otherwise one can transform the coordinate system to satisfy this. Let $Q$ be any convex polygon inside $\mathcal{T}$. 
	The sides of $Q$  which have exactly one endpoint on $\mathcal{B}$ are called the {\em legs} of $Q$, and the side which has two vertices at $\mathcal{B}$  is called the {\em base} of $Q$.  For a point $p \in \mathcal{T}$,  $x(p)$ denotes the $x$-coordinate of $p$.


	{\footnotesize{
			\begin{table}[t]
				\centering
				\begin{tabular}{|p{3.3cm}|c|c|c|c|c|}
					\hline
					Problem & Time & Apprx. & In Object & Ref.\\
					\hline\hline
					Longest line segment  &  $O(n ^{1.99})$ & exact & Simple Polygon & \cite{chazelle1990algorithm} \\
					Longest line segment  &  $O(n \log^2 n)$ & $1-\epsilon$ & Simple Polygon & \cite{hall2006finding} \\
					Max $\mathscr{A}$ Triangle  &  $O(n^4)$& exact & Simple Polygon &\cite{MelissaratosS90} \\ 
					Max $\mathscr{P}$  Triangle  &  $O(n^4)$& exact & Simple Polygon &\cite{MelissaratosS90} \\
					Max $\mathscr{A}$ Triangle  &  $O(n \log n)$& 1/4 & Simple Polygon &\cite{hall2006finding} \\
					Max $\mathscr{A}$ Triangle  &  $O(n \log n)$& exact & Terrain &\cite{cab} \\
					Max $\mathscr{A}$ Triangle  &  $O(\epsilon^{-1} n \log^2 n)$& $1-\epsilon$ &  Terrain& \cite{das2021largest} \\ \hline
					Longest line segment  &  $O(n)$& exact & Terrain & Thm.~\ref{thm:diameteroft} \\
					Max $\mathscr{P}$ Triangle  &  $O(n  \log n)$& exact & Terrain & Thm.~\ref{thm:exacttri} \\	
					Max $\mathscr{P}$ at most $k$-gon  &  $O(k^{10}\epsilon^{-2k} n \log^2 n)$& $1-\epsilon$ & Terrain & Thm.~\ref{thm:kgon} \\	
					Max  $\mathscr{A}$ at most $k$-gon  &  $O(k^{10}\epsilon^{-2k} n \log^2 n)$& $1-\epsilon$ & Terrain & Thm.~\ref{thm:areatri}	\\				
					\hline
				\end{tabular}
				\caption{A summary of the new and known results; $\mathscr{A}$, $\mathscr{P}$ stands for the area and perimeter, respectively. }
				\label{table:results}
			\end{table}
	}}

	
	\section{Case $k=3$: Largest Perimeter Triangle}  \label{sec:perimeter}
	
	In contrast to the largest area triangle inside a terrain, a largest perimeter triangle in a terrain does not necessarily have a side coincident with $\mathcal{B}$. See Fig.~\ref{fig:translate}a. 
	We show that there exists a $\Lambda^*_3$ which has a vertex or a side coincident with $\mathcal{B}$. 
	We consider these two cases independently, and report the best solution. 

	\begin{figure} 
		\centering
		\includegraphics[scale=0.7]{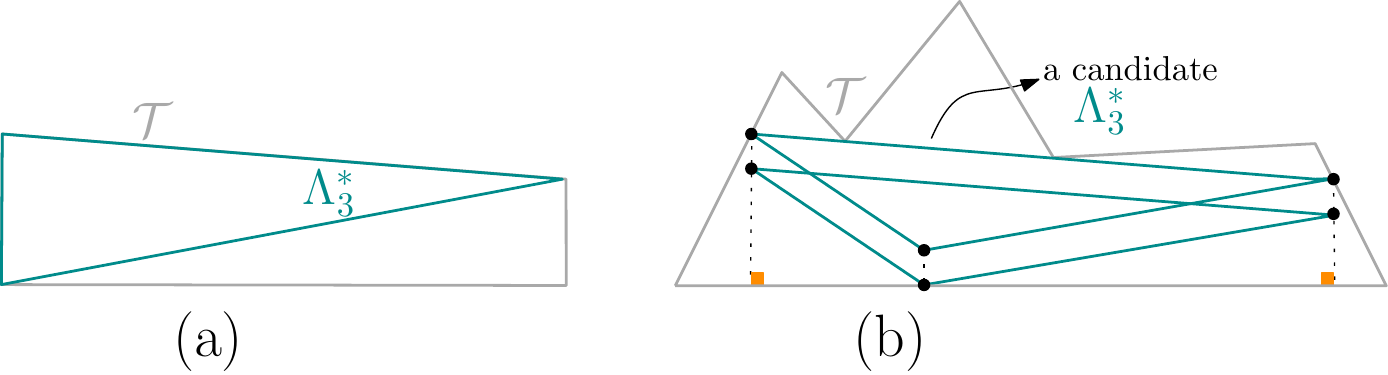}
		\caption{(a) A terrain of four vertices at which $\Lambda^*$ does not have a side on the base of the terrain. (b) Illustration of Lemma~\ref{lem:base}.  }
		\label{fig:translate}
	\end{figure}
	
	\sidecaptionvpos{figure}{c}
	\begin{SCfigure} [40][t]
		\includegraphics[scale=0.9]{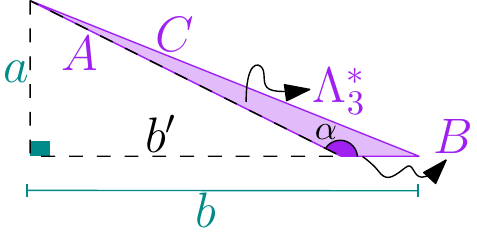}
		\caption{Illustration of Lemma~\ref{lem:angpstar}.  }
		\label{fig:pstar}
	\end{SCfigure}

	\begin{lemma} \label{lem:base}
		There is a largest perimeter triangle inside the terrain which has its base or one of its vertices coincident with the base of the terrain. 
	\end{lemma}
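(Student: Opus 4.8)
The plan is to begin from an arbitrary largest-perimeter triangle and push it straight down until it meets $\mathcal{B}$, exploiting the fact that perimeter is invariant under translation while the region bounded by $\mathcal{T}$ and $\mathcal{B}$ is cut out by a one-sided vertical constraint. First I would observe that this region is compact and that the perimeter is continuous on triples of points, so the maximum is attained; fix one optimal triangle $\Lambda^*_3$. Taking $\mathcal{B}$ on the line $y=0$ (permissible since, by the preliminaries, $\mathcal{B}$ is horizontal), the terrain region is exactly $R=\{(x,y): x_\ell\le x\le x_r,\ 0\le y\le h(x)\}$, where $h$ is the height function of the chain, well defined because $\mathcal{T}$ is $x$-monotone.

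The key point is that downward vertical translation respects $R$. For any $(x,y)\in R$ and any $\delta\ge 0$, the translate $(x,y-\delta)$ automatically satisfies the upper constraint, since $y-\delta\le y\le h(x)$ and $h(x)$ is unchanged, while the horizontal extent is untouched; the only binding condition is $y-\delta\ge 0$. Hence $\Lambda^*_3-(0,\delta)\subseteq R$ precisely for $0\le\delta\le y_{\min}$, where $y_{\min}$ is the smallest $y$-coordinate attained on $\Lambda^*_3$. I would then take $\delta=y_{\min}$, lowering the triangle until its bottommost feature rests on $\mathcal{B}$. As translation is a rigid motion, the resulting triangle has the same perimeter and is therefore again a $\Lambda^*_3$.

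It remains to identify which feature makes contact. Since $\Lambda^*_3$ is convex and lies in $\{y\ge 0\}$, the set of its points attaining $y_{\min}$ is either a single vertex or a horizontal edge, because a non-horizontal edge cannot be entirely at the minimum height. In the first case the translated triangle has a vertex on $\mathcal{B}$; in the second it has a full side lying along $\mathcal{B}$, i.e.\ its base (in the sense of the preliminaries) coincides with $\mathcal{B}$. Either outcome is exactly what the lemma asserts.

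The compactness remark and the monotonicity of the vertical constraint are routine; the step that needs the most care is the final case split, namely confirming that ``$\Lambda^*_3$ meets $\mathcal{B}$'' forces a whole vertex or a whole side onto $\mathcal{B}$ rather than an interior edge point, which I would derive from convexity together with $\Lambda^*_3\subseteq\{y\ge 0\}$ and the horizontality of $\mathcal{B}$. I would also verify the boundary bookkeeping: the triangle's $x$-range lies in $[x_\ell,x_r]$, so the point(s) reaching $y=0$ genuinely fall on the segment $\mathcal{B}$ and not beyond its endpoints.
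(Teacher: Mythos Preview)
Your argument is correct and follows exactly the same idea as the paper's proof: translate an optimal triangle straight down until it first meets $\mathcal{B}$, using $x$-monotonicity to guarantee the translate stays inside the terrain. The paper's proof is a one-line version of what you wrote; your added care about existence, the case split between a vertex versus a horizontal edge landing on $\mathcal{B}$, and the $x$-range bookkeeping simply makes the same reasoning explicit.
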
	
	\begin{proof}
		 If $\Lambda^*_3$ does not have any vertex on the base of the terrain, since the terrain is monotone, we can translate $\Lambda^*_3$ downward until it has a vertex or a side on the base of the terrain. See Fig.~\ref{fig:translate}b.  \qed
		\end{proof}

 \subsubsection*{$\Lambda^*_3$ having its base on $\mathcal{B}$ } \label{sec:exactalg}  
	We first seek a $\Lambda^*_3$ which has its base coincident with $\mathcal{B}$. We will use the following lemmas in some of our proofs. 
	
	\begin{lemma}
		\label{lem:angpstar}
		Let the base of  $\Lambda^*_3$ coincides with $\mathcal{B}$, and let $\alpha$ and $\beta$ be the internal angles of the legs of $\Lambda^*_3$. Then  $0<\alpha,\beta \le \frac \pi 2$. 
	\end{lemma}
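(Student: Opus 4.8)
The plan is to argue by contradiction: assume one of the base angles, say $\alpha$, exceeds $\pi/2$, and then construct a feasible triangle with base on $\mathcal{B}$ whose perimeter strictly exceeds that of $\Lambda^*_3$, contradicting optimality. Since the three interior angles sum to $\pi$, at most one of $\alpha,\beta$ can be obtuse, so handling this single case (together with its mirror image at $B$) suffices; positivity $\alpha,\beta>0$ is immediate from the non-degeneracy of the triangle.

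First I would fix coordinates so that $\mathcal{B}$ lies on the $x$-axis and write the triangle as $A=(a,0)$, $B=(b,0)$ with $a<b$, and apex $C=(c_x,c_y)$, $c_y>0$. The leg from $A$ makes the internal angle $\alpha$ with the base edge $AB$, which points in the positive $x$-direction; hence $\cos\alpha = (c_x-a)/|AC|$, so $\alpha>\pi/2$ is equivalent to $c_x<a$, i.e.\ the apex projects onto a point of $\mathcal{B}$ strictly to the left of $A$.

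The key construction is to slide the left base vertex leftward to the foot of the perpendicular dropped from $C$, namely $A'=(c_x,0)$; note that $A'$ lies to the left of $A$ and still inside the $x$-range of $\mathcal{T}$ because $C$ does. The new triangle $A'BC$ has a right angle at $A'$, its leg $A'C$ is the vertical segment below $C$, and its edge $BC$ is unchanged. Comparing perimeters, only the $A$-side terms change: writing $d=a-c_x>0$, the base grows by $d$ while the leg shrinks from $\sqrt{d^2+c_y^2}$ to $c_y$, so the net change is $d+c_y-\sqrt{d^2+c_y^2}>0$ by the strict triangle inequality. Thus $A'BC$ has strictly larger perimeter.

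The step I expect to be the main obstacle is verifying that $A'BC$ actually lies inside $\mathcal{T}$, since the region under an $x$-monotone chain need not be convex, so containing the three edges does not a priori contain the triangle. Here I would exploit that the region is vertically convex: at each abscissa it is the interval from $\mathcal{B}$ up to the chain height $h(\cdot)$. The vertical leg $A'C$ satisfies $c_y\le h(c_x)$ because $C\in\mathcal{T}$; the base edge $A'B$ lies on $\mathcal{B}$; and for every $x\in[c_x,b]$ the upper boundary of $A'BC$ is exactly the segment $BC$, which already lay inside $\mathcal{T}$ in the original triangle, hence its height there is at most $h(x)$. Therefore the whole triangle stays under the chain, establishing feasibility and completing the contradiction; the mirror argument at $B$ gives $\beta\le\pi/2$.
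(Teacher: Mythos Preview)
Your proof is correct and follows essentially the same approach as the paper: assume $\alpha>\pi/2$, drop the perpendicular from the apex to $\mathcal{B}$, and use the triangle inequality to show the resulting right triangle has strictly larger perimeter. Your treatment is in fact more careful than the paper's, which omits the feasibility check; your use of vertical convexity of the terrain region together with the observation that the hypotenuse $BC$ is already an edge of $\Lambda^*_3$ fills that gap cleanly.
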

	
	\begin{proof}Suppose, by contradiction, there is a solution to $\Lambda^*_3$ at which (w.l.o.g) $\alpha > \frac \pi 2$.  Let $A,B$ and $C$  denote the sides of $\Lambda^*_3$ in counter clockwise (CCW), where $A$ and $C$ are the legs of $\Lambda^*_3$, as illustrated in Fig.~\ref{fig:pstar}.
		With a slight abuse of the notation, let each side also denote its length. 
		Consider the right triangle $\Lambda'$ which is constructed by drawing the perpendicular to the base of the terrain through the top vertex of $\Lambda^*_3$. Let $a,b$ and $C$ denote the sides of $\Lambda'$, where $C$ denotes the hypotenuse of this triangle and $a,b$ denote the adjacent sides of the right angle. Let  $b'= b- B$.   
		From the triangle inequality, we have $a+b'>A$. By adding a $B$ to either side of the inequality we have $a+b'+B>A+B$, hence $a+b>A+B$, which means the perimeter of $\Lambda'$ is larger than the perimeter of $\Lambda^*_3$, a contradiction. 
		Consequently, we have
		$0<\alpha,\beta \le \frac \pi 2$. \qed
	\end{proof}

	\begin{lemma}\label{lem:alg}
		A largest perimeter triangle which has its base on $\mathcal{B}$, either (i) has its topmost vertex on the boundary of the terrain, in which case each of its legs is passing through at least a vertex of the terrain, or (ii) each of its sides is supported by two vertices of the terrain. 
	\end{lemma}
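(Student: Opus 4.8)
The plan is a local-perturbation (contact-counting) argument: starting from an optimal triangle with base on $\mathcal{B}$, I show that any ``slack'' in its contact with the terrain permits a perimeter-increasing perturbation, so an optimum must exhibit the contacts listed in (i) or (ii). By Lemma~\ref{lem:base} such an optimum exists, and by compactness of the family of triangles with base on $\mathcal{B}$ contained in $\mathcal{T}$ a maximizer is attained. Write the triangle as base endpoints $p,q\in\mathcal{B}$ and apex $r$, with perimeter $|pq|+|pr|+|qr|$. The family has four degrees of freedom (the two abscissae of $p,q$ and the two coordinates of $r$), and containment in $\mathcal{T}$ requires that both the apex and each leg lie below the chain.

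The engine of the argument is the following first-order fact. With the base $pq$ fixed,
\[
\nabla_r\bigl(|pr|+|qr|\bigr)=\frac{r-p}{|r-p|}+\frac{r-q}{|r-q|},
\]
the sum of the two unit vectors pointing from the base endpoints toward the apex. For a non-degenerate triangle these are never antiparallel, so the gradient never vanishes. Hence the perimeter strictly increases along some direction of motion of $r$, and the optimal apex cannot sit in the interior of its feasible region: either the apex lies on the chain $\mathcal{T}$, or its motion in the ascent direction is blocked by a leg becoming flush against a terrain vertex. This already forces the apex to be pinned by contacts.

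Next I would treat the base. Sliding a base endpoint, say $p$, outward along $\mathcal{B}$ (away from $q$) with $r$ held fixed lengthens $|pq|$, and—since by Lemma~\ref{lem:angpstar} each leg's internal angle is at most $\tfrac\pi2$—also lengthens $|pr|$, while $|qr|$ is unchanged; so it strictly increases the perimeter. As $\mathcal{T}$ is $x$-monotone this motion stays feasible until the leg $pr$ first meets a terrain vertex (or $p$ reaches an end of $\mathcal{B}$, which is itself a terrain vertex). Thus at an optimum each leg passes through at least one terrain vertex. Combining: if the apex lies on $\mathcal{T}$ we are in case (i); if the apex does not lie on $\mathcal{T}$, the nonvanishing gradient forces it to be pinned in two independent directions purely by leg--vertex contacts, and pushing the perturbation further (sliding the still-underdetermined configuration until a second vertex is caught on each side) upgrades each side to being supported by two terrain vertices, giving case (ii).

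The main obstacle is turning the informal ``pinned by contacts'' into the precise statements (i) and (ii), i.e.\ the exhaustive bookkeeping of which constraints are active. The delicate configurations are those where the apex is interior but each side touches only a single vertex: there the perimeter may be merely \emph{stationary}, so no direct ascent is available. I would handle these by a perimeter-preserving slide—moving along the one-parameter family of equal-first-order-perimeter configurations until an additional terrain vertex is caught—thereby reducing, without loss of optimality, to a configuration in which the apex sits on the chain (case (i)) or every side carries two vertices (case (ii)). Making this reduction airtight—ruling out that such a slide strictly decreases the perimeter before a new contact appears, and correctly accounting for the boundary contacts at the two ends of $\mathcal{B}$—is the technical heart of the proof.
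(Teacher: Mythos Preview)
Your outline shares the paper's overall shape---perturb an optimum to force contacts---and the base-sliding step (using Lemma~\ref{lem:angpstar} to see that moving $p$ outward lengthens both $|pq|$ and $|pr|$) is exactly what the paper does to pin at least one terrain vertex on each leg. The gradient formulation for the apex is a clean way to say ``the apex cannot float free,'' and matches the paper's first paragraph.

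The gap is in case~(ii), which is where all the work lies. Once each leg already passes through one (reflex) terrain vertex, your gradient $\nabla_r(|pr|+|qr|)$ no longer governs the problem: with leg--vertex incidences active, moving $r$ drags $p$ and $q$ along with it, so the relevant derivative is on a constrained family, not the free one you computed. Your proposed fix---a ``perimeter-preserving slide'' along a level set until a second vertex is caught---is not shown to exist or to avoid strict decrease, and you yourself flag this as unfinished. That is precisely the point the paper spends its effort on. The paper's mechanism is different and concrete: it freezes one leg, takes the other leg $l_r$ pivoting about its single reflex contact $O$, and analyzes the perimeter as a function of this one rotation parameter. Via an elementary (if intricate) Euclidean comparison---building auxiliary right triangles, using the bisector at $A$ to compare $OBb_0$ with $Oc_0c_1$, and invoking Lemma~\ref{lem:angpstar} to exclude obtuse configurations---it shows that if a small CW rotation decreases the perimeter then CCW rotation strictly increases it, forcing the maximum to occur where $l_r$ meets a second terrain vertex. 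So the paper never needs a level-set or second-order argument; it proves directly that the one-parameter rotation has no interior maximum. To close your proof you would need either to carry out an equivalent analysis of that one-parameter family, or to justify rigorously that your slide does not lose perimeter---neither of which is in the proposal as written.
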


	\begin{proof}
		The sides or the vertices of  $\Lambda^*_3$ must be blocked by the boundary of the terrain, otherwise, we still can improve the perimeter. 
		
		In the following, we prove for restricting any further possible enlargement, $\Lambda^*_3$ must have either its top  vertex on the boundary of the terrain in which case each of its legs is passing through a vertex,
		or it has its legs passing through two vertices of the terrain. 
		In case (i), for the sake of contradiction, suppose $\Lambda^*_3$ has its top vertex on the terrain, but the legs of $\Lambda^*_3$   are not supported by any vertices of the terrain. But then, while we keep the top vertex fixed, we improve the perimeter by moving the vertices on the base of the terrain in opposite directions to get further from each other,  until each leg encounters a vertex of the terrain which restricts its further movement. 
		This gives a contradiction with the optimality of $\Lambda^*_3$. 
		
		In case (ii),  suppose there is a solution to $\Lambda^*_3$ where a leg $\ell$ of $\Lambda^*_3$ is supported by only one vertex (otherwise, while the top vertex is fixed, similar to above, we can improve the perimeter by moving the vertices of $\Lambda^*_3$ on the base on opposite directions until each leg touches at least one vertex of the terrain). 
		From Lemma~\ref{lem:angpstar}, the interior angles of $\Lambda^*_3=ABC$ are at most $\frac {\pi}{2}$. Now suppose, by contradiction, at least one of the legs of $\Lambda^*_3$, say the right one, is supported by only one vertex.
		Let $l_r$ denote the right leg, and let $l_l$ denote the left leg. 
		Observe that if $l_r$ is supported by a convex vertex, the top vertex of $\Lambda^*_3$ lies at the boundary of the terrain. Hence let 
		$l_r$ be supported by a reflex vertex $O$. We consider the changes of the perimeter of $\Lambda^*_3$ as a function of $l_r$, while we rotate $l_r$ around $O$ in CW or CCW directions. 
		
		$l_r$ can be rotated around $O$ until it encounters another vertex at which $l_r$ is blocked and further rotation is not possible. 

		First, suppose the point $O$ lies on the same side  as $l_l$, with reference to the supporting line through the bisector of $l_l$ and $\mathcal{B}$ (the case where the point $O$ lies on the other side of the bisector is symmetric); see Fig.~\ref{fig:hyperbol}. Consider two triangles $Ab_2c_1$ and $Ab_0c_3$ by a slight rotation of $l_r$ in CW and CCW direction, respectively, where $b_0B=Bb_2$. 
		Note that such triangles always exist even for a very small value of $b_0B$, as $l_r$ is pivoted at only one vertex. Suppose by rotating $l_r$, around $O$, in the CW direction, the perimeter of the resulting triangle is decreased. Hence, 
		$Ab_2c_1 < ABC$~(1). We show that further rotations of $l_r$ in the CCW direction would increase the perimeter of $ABC$.

		Consider $b_3c_0$ as a further rotation of $l_r$ in CW direction, at which $c_0$ lies at $\mathcal{B}$,  and $b_3$ lies at the supporting line of $AB$. Let $Oc_o$ is vertical to $\mathcal {B}$.
		
		From (1) we have $Ab_0+b_0B+Bb_2+b_2c_1+Ac_0+
		c_0c_1 < Ab_0+b_0B+BC+c_1C+c_0c_1+Ac_0$. Since $b_0B=Bb_2$, we have $Bb_2+ b_2c_1 <BC+c_1C$.

	\begin{SCfigure} [40][t]
			\centering
			\includegraphics[scale=0.6]{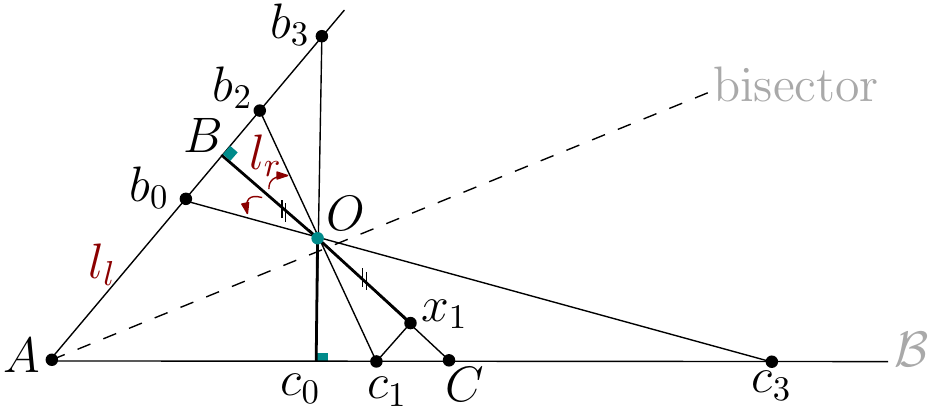}
			\caption{Illustration of Lemma~\ref{lem:alg}. }
			\label{fig:hyperbol}
		\end{SCfigure}

		First let $BC$ is vertical to $l_l$. 
		We show that if we rotate $l_r$ in CCW, the resulting triangle is bigger than $BC$ and has an internal angle bigger than $\frac {\pi}{2}$, which cannot happen according to Lemma~\ref{lem:angpstar}. If $BC$ is vertical to $l_l$, the rotation of $l_r$ in the CW direction is allowed, but it should decrease the perimeter of the resulting triangle. But then, we show that the assumption of by rotating $l_r$ in CW direction the perimeter of the resulting triangle is decreased gives a contradiction with the optimality of $ABC$. 
		
		If $O$ lies on the bisector of the angle $A$, the triangles $ABO$ and $AOc_0$ have the same perimeter, while $Oc_o$ is vertical to $\mathcal {B}$. The same argument also holds for $OBb_0$ and $Oc_0c_1$. But if $O$ lies to the left of the bisector of angle $A$, directed from $A$ to $O$, the triangles are still similar, but $ABO < AOc_0$ and  $b_0BO  < Oc_0c_1$, since the length of one side of each of these triangles, is decreased. 
		Consider a line parallel to $AB$ through $c_1$. Let $x_1$ be the intersection of this parallel line and $BC$.

		Since $O$ lies to the left of the bisector of angle $A$, directed from $A$ to $O$, 	from and the similarity of the triangles $OBb_0$ and $Oc_0c_1$, the fact that $Oc_1x_1$ falls inside $Oc_1C$, and $c_1x_1=b_0B$, 
		we have $b_0B<c_1C$, which implies $b_2c_1 <BC$, because 
		$Bb_2+ b_2c_1 <BC+c_1C$ and $Ab_2c_1 < ABC$. 
		
		Again using the fact that $O$ lies to the left of the bisector of $l_l$ and $\mathcal{B}$, with the same argument as above we have $Cc_3>b_0B$. 
		But then, since $b_0O>BO$, $Oc_0>OB$  (since $O$ lies to the left of the bisector) and $Oc_3>Oc_0$ (since $Oc_oc_3$ is a right triangle with $Oc_3$ as hypotenuse), we have $b_0c_3>BC$, which implies  $Ab_0c_3>ABC$, while $Ab_0c_3$ has also one internal angle bigger than $\frac {\pi}{2}$, contradiction with the optimallity of $ABC$ and Lemma~\ref{lem:angpstar}.
		
		Now suppose $BC$ is not vertical to $l_l$. But then the rotation of $l_r$ is allowed in both CW and CCW directions. 
		The proof is as above, using the fact that the side next to the largest angle in a triangle has the largest length. 
		Hence, we again reach $Ab_0c_3>ABC$, which gives a contradiction with the optimality of $ABC$. 
		
		The proof in the case where 	$O$ lies on the bisector of the angle $A$ is much simpler.  
		Observe that among all possible rotations of $l_r$, the resulting maximum perimeter triangle is one of the two triangles at which further rotations of $l_r$ is not possible because either $l_r$ is blocked because it encounters another reflex vertex or $l_r$ reaches an endpoint of $l_l$. But in both of these cases, $l_r$ must be supported by two vertices, where one of them is $O$. 	The Lemma follows. 	
		\qed

		
		
	\end{proof}

	
		The relation between the triangles refers to their perimeter.

	\subsubsection*{Algorithm}
	In~\cite{cab,das2021largest} it is shown that a largest-area triangle inside $\mathcal{T}$ which has a base on the base of the terrain has the properties which we proved in Lemma~\ref{lem:alg}. The presented algorithms in~\cite{cab,das2021largest} enumerate all maximal triangles with these properties. Hence, we use the same algorithm, but we just report the largest-perimeter triangle instead of the largest-area triangle. For the case (i) of Lemma~\ref{lem:alg}, at which the top vertex lies at the boundary of the terrain, we apply Lemma 2 of~\cite{cab} to treat this case in $O(n \log n)$ time. 
	For the case (ii) we use the following results of~\cite{cab}.

	In Section~4 in~\cite{cab}, it is shown that each segment which is connecting two vertices of the terrain and may realize a side of $\Lambda^*_3$, would have two prolongations inside the terrain, at which in the prolongation in one direction the segment intersects the base, and in the other direction, the segment intersects the upper boundary of the terrain. Let $S$ denote all such segments. In Lemma 4 in~\cite{cab}, it is shown that the set of candidates of the left and the right sides of $\Lambda^*_3$ which consists of the prolongations of the segments in $S$, in both directions, can be computed in $O(n)$ time, as it uses the Guibas et al.~\cite{guibas1987linear} idea on how to compute the shortest path trees inside a simple polygon in $O(n)$ time. 
	Let $L$ denote the segments of $S$ with positive slopes, and let $R$ denote the segments of $S$ with negative slopes. 
	In the case where the prolongations of one element from $L$ and one element from $R$ lies inside the terrain, a triangle would be constructed which realizes a candidate of $\Lambda^*_3$. The set of all such intersection points (Lemma 5 in~\cite{cab}) realize the candidates of the top vertex of $\Lambda^*_3$, which can be computed in $O(n\log n)$ time (Lemma 4,5 in~\cite{cab}). Hence, we also construct all such triangles and report the one having the largest perimeter. 
	
	Consequently, a largest perimeter triangle contained in a 1.5D terrain of $n$ vertices which has a side on the base of the terrain can be computed in $O(n\log n)$ time. 
	In the following, we consider the case where a largest perimeter triangle has a vertex on the base of the terrain. 
	
\subsubsection*{$\Lambda^*_3$ having a vertex on $\mathcal{B}$ }	A largest perimeter triangle with a vertex on the base of the terrain has this property that the side opposite to the vertex on $\mathcal{B}$ has a maximal length since otherwise the perimeter can be improved. In Lemma~\ref{lem:longstick} we show that a maximal length line segment in $\mathcal{B}$ must be supported by two vertices of $\mathcal{T}$. Here we seek the maximal length line segments for which they are supported by two vertices of the terrain, but they do not necessarily have a prolongation which intersects the base of the terrain. Such maximal line segments are still supported by two vertices, and are a subset of the edges of the shortest path tree inside a simple polygon, that can be computed in $O(n)$ time, as discussed in detail in Section 3 in~\cite{cab}. 
	Thus we already know the two vertices for each potential largest perimeter triangle, and it remains to compute the optimal placement of the third vertex which lies on $\mathcal{B}$. According to Lemma~\ref{lem:ellipse}, the third vertex chooses an extreme placement on the line segment which it lies on, i.e., the leftmost or the rightmost possible locations on $\mathcal{B}$, at which the triangle does not intersect the terrain boundary. 
	Hence, for each maximal line segment, two maximal perimeter triangle needs to be considered, and we choose the largest one. Let $a,b$ denote the vertices of a maximal line segment $\ell$, and let $c \in \mathcal{B}$ be the third vertex of the maximal triangle having $\ell$ as a side. W.l.o.g., let $a$ be the left endpoint of $\ell$. 
Then  obviously either $c$ is a vertex of $\mathcal{B}$ or $ac$ is blocked by a reflex vertex $r$ (to stop further improvement of the perimeter), or it is passing through the terrain edge containing the vertex $a$. Handling the first and the last cases are straightforward. In the case where there is a  reflex vertex $r$ that blocks $ac$, we need to find $r$ precisely. But then, again, the line segment connecting $r$ to the leftmost vertex of the terrain edge containing $a$ is a subset of $L$, as it is supported by two terrain vertices and also has a vertex on $\mathcal{B}$~\cite{cab}. Hence, all such reflex vertices  can be computed in $O(n)$ time. The intersection of $\mathcal{B}$ and the prolongation of the line segment passing through $r$ and $a$  can be computed in constant time.  
%
	Consequently, a largest perimeter triangle which has a vertex on the base of the terrain can be computed in $O(n)$ time.

	\begin{theorem}
		\label{thm:exacttri}
		The largest perimeter triangle contained in a 1.5D terrain of $n$ vertices can be computed in $O(n\log n)$ time.
	\end{theorem}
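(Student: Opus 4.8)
The plan is to lean on Lemma~\ref{lem:base}, which guarantees that some optimal triangle $\Lambda^*_3$ either has its base on $\mathcal{B}$ or has exactly one vertex on $\mathcal{B}$. I would compute the best triangle under each of these two scenarios independently and return whichever has larger perimeter. Correctness then follows immediately from Lemma~\ref{lem:base}, and the overall running time is the maximum of the two sub-procedures.

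For the case where the base lies on $\mathcal{B}$, the structural characterization is supplied by Lemma~\ref{lem:alg}: every maximal candidate is either a type (i) triangle whose apex sits on the terrain boundary with both legs pinned against vertices, or a type (ii) triangle all of whose sides are supported by two terrain vertices. The key observation I would make is that this is exactly the family of candidate triangles enumerated by the largest-area algorithm of \cite{cab,das2021largest}; since their enumeration is driven purely by the vertex-support combinatorics and is agnostic to the objective, I can run the same enumeration and simply substitute perimeter for area when scoring each candidate. Concretely, type (i) is handled by Lemma~2 of \cite{cab} in $O(n\log n)$ time, while type (ii) uses the segment sets $L$ and $R$ of prolonged two-vertex segments of positive and negative slope, whose pairwise intersections yield the apex candidates in $O(n\log n)$ time via the shortest-path-tree machinery of \cite{guibas1987linear}. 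Evaluating perimeter on each candidate is $O(1)$, so this case costs $O(n\log n)$.

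For the case where a single vertex of $\Lambda^*_3$ lies on $\mathcal{B}$, I would first invoke Lemma~\ref{lem:longstick} to argue that the side opposite the base vertex must be a maximal-length segment supported by two terrain vertices; such segments form a subset of the shortest-path-tree edges and are produced in $O(n)$ time. Fixing such a segment $\ell=ab$ leaves only the placement of the third vertex $c\in\mathcal{B}$, and Lemma~\ref{lem:ellipse} forces $c$ to an extreme (leftmost or rightmost) feasible position, determined either by a vertex of $\mathcal{B}$, by a reflex vertex $r$ blocking the leg, or by the terrain edge through $a$. Each of these is located in $O(1)$ amortized time, since the relevant blocking segments again belong to $L$ and are precomputed in $O(n)$, so the entire second case runs in $O(n)$ time.

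Combining the two cases, the dominant cost is the $O(n\log n)$ enumeration for the base-on-$\mathcal{B}$ case, giving the claimed bound. The main obstacle, which is already discharged by Lemmas~\ref{lem:angpstar} and~\ref{lem:alg}, is verifying that switching to the perimeter objective does not enlarge the candidate set beyond the one used for area: the angle bound $\alpha,\beta\le\frac{\pi}{2}$ is precisely what rules out degenerate maximal configurations and lets the area algorithm's candidate enumeration be reused verbatim. Once that reduction is in hand, the theorem reduces to a bookkeeping combination of the two cases together with the per-candidate perimeter evaluation.
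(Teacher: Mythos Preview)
Your proposal is correct and mirrors the paper's own argument essentially step for step: the same Lemma~\ref{lem:base} case split, the same reuse of the \cite{cab,das2021largest} enumeration (via Lemma~\ref{lem:alg}, Lemma~2 of \cite{cab}, and the $L$/$R$ segment sets) for the base-on-$\mathcal{B}$ case, and the same Lemma~\ref{lem:longstick}/Lemma~\ref{lem:ellipse} treatment of the vertex-on-$\mathcal{B}$ case with the three sub-cases for locating $c$. There is nothing substantive to add.
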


	\begin{figure}[t]
		\centering
		\includegraphics[scale=0.5]{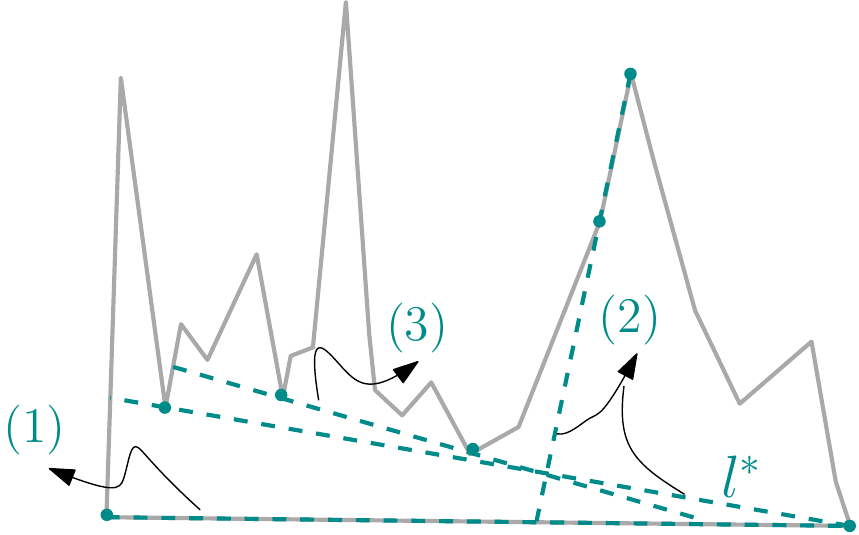}
		\caption{ Some examples for different candidates of the diameter of the terrain. The types of the events of  $l^*$ are denoted by arrows. }
		\label{fig:def}
	\end{figure}

	\subsection*{Case $k=2$: Diameter of the Terrain} 
	
	We define the {\em diameter} of a 1.5D terrain $\mathcal{T}$ as the longest line segment within $\mathcal{T}$ and denote it by $l^*$. 
	In the following, we discuss how we can compute $l^*$ in $O(n)$ time. 
	We design an approximation algorithm for Problem~\ref{prob:kgon} based on $l^*$.
	
	\begin{lemma} \label{lem:longstick}
		$l^*$ is either (1) supported by two convex vertices, 
		(2) supported by a reflex vertex and a convex vertex, or (3) supported by two reflex vertices of $\mathcal{T}$.   
	\end{lemma}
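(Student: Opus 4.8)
The plan is to treat $l^*$ as a \emph{maximal chord} of the simple polygon $R$ bounded above by $\mathcal{T}$ and below by $\mathcal{B}$, and to show by a local perturbation argument that any such longest chord must be pinned by two vertices of $\mathcal{T}$; the three cases then follow immediately by recording whether each pinning vertex is convex or reflex.

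First I would settle existence and the boundary condition. Since $R$ is compact and chord length is continuous, a longest contained segment $l^*$ exists, and both of its endpoints must lie on $\partial R$ — otherwise an endpoint could be slid outward along the supporting line to strictly lengthen $l^*$. Thus $l^*$ is a maximal chord, and it remains to prove it is \emph{supported} by two vertices of $\mathcal{T}$, i.e.\ that it meets two distinct vertices, each either as an endpoint or as a reflex vertex lying in its relative interior.

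The crux is the perturbation argument, and the observation that makes it uniform is a \emph{convexity} fact: along each natural one-parameter motion of the supporting line, the length of the resulting maximal chord is a convex function of the motion parameter, so it attains no interior maximum and its maximum over the feasible range is forced to a vertex event. Concretely, suppose for contradiction $l^*$ touches at most one vertex. If it touches none, both endpoints lie in edge interiors; translating the supporting line parallel to itself moves the two endpoints affinely, so the length $|b+\rho w|$ is convex in the offset $\rho$, whence $l^*$ — being interior to its feasible cell — cannot be a local maximum unless the length is constant along the translation, which happens only for parallel edges and is then broken by a rotation. If $l^*$ touches exactly one vertex $v$, I would pivot about $v$: each endpoint's distance to its opposite edge line is a secant-type term $d/\cos(\phi-\phi_0)$, convex in the rotation angle $\phi$ and minimized (not maximized) in the perpendicular direction, so their sum is convex and some rotation strictly lengthens $l^*$ until a second vertex is met. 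In every case we contradict maximality, so $l^*$ is supported by two vertices. I expect the main obstacle to be discharging the degenerate configurations cleanly — parallel or mutually perpendicular supporting edges, and the possibility that a motion is blocked not by a vertex of $\mathcal{T}$ but by the base $\mathcal{B}$; I would handle the latter by noting that the extreme points of $\mathcal{B}$ are themselves (convex) vertices of $\mathcal{T}$, so a chord jammed at the base is still pinned by a vertex of the chain.

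It then remains to classify the two supporting vertices. Every vertex of $\mathcal{T}$ is either convex or reflex with respect to $R$ (informally, peaks of the chain are convex and valleys are reflex), so the unordered pair of supports is exactly one of $(\text{convex},\text{convex})$, $(\text{convex},\text{reflex})$, or $(\text{reflex},\text{reflex})$, which are precisely the three cases~(1),~(2),~(3); see Fig.~\ref{fig:def} for representatives of each type.
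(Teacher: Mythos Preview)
Your argument is correct and follows essentially the same perturbation strategy as the paper: rule out zero supporting vertices by translating the chord, rule out a single supporting vertex by rotating about it, and then classify the pair of supports. The paper's version is terser---it simply asserts that rotating about the single blocking reflex vertex increases the length in at least one direction until a second vertex is hit---whereas you supply the explicit convexity of the chord-length function (affine under translation, a sum of secant terms under rotation) that justifies why no interior local maximum can occur, and you also handle the base~$\mathcal{B}$ and the degenerate parallel-edge case that the paper leaves implicit.
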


	\begin{proof}
		The longest line segment in the terrain must be blocked by the vertices of the terrain, otherwise, we still can increase its length by rotation or translation. 
		Case (2) have two different combinatorial structures, where  $l^*$ passes through an edge adjacent to a reflex vertex and ends up at a convex vertex, or $l^*$ originated from a convex vertex and passes through a reflex vertex. See Fig.~\ref{fig:def}.
		If the diameter is not a diagonal between two convex vertices, it must be blocked by at least one reflex vertex; if not we can improve the length by translation and/or rotation until the segment encounters a vertex to be blocked from further improvement on the lengths. 
		Let $r$ be the reflex vertex that blocks  $l^*$.  
		If we rotate  $l^*$ around $r$, in at least one of the CW or CCW directions the length of $l^*$ would be increased, until $l^*$ encounters another reflex vertex or a convex vertex that  further rotation is not possible. \qed 
	\end{proof}

	For computing $l^*$, the existence of an $O(n^2 \log n)$ time algorithm by considering any pair of vertices and ray shooting queries is obvious. In the following, we discuss the improvement of this running time. 
	
	\begin{theorem} \label{thm:diameteroft}
		There is a solution to the diameter of the terrain which has a vertex on  $\mathcal{B}$ and can be computed in $O(n)$ time. 
	\end{theorem}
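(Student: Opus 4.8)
The plan is to prove the two assertions separately: first that some diameter touches $\mathcal{B}$ at an endpoint, and then that such a diameter can be found in linear time. For the structural claim I would reuse the downward-translation idea of Lemma~\ref{lem:base}. Since $\mathcal{T}$ is $x$-monotone, its interior is exactly the region lying below the graph of a single-valued height function and above the horizontal base $\mathcal{B}$. Take any diameter $l^*$ with endpoints of heights $y_1\le y_2$. Translating $l^*$ vertically downward by $t\ge 0$ leaves its $x$-extent unchanged and only decreases heights, so every point stays below the chain; the only constraint that can become violated is that a point drop below $\mathcal{B}$, and this first becomes tight at the lower endpoint, when $t$ equals the gap between that endpoint and $\mathcal{B}$. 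As a vertical translation is an isometry, the translated segment has the same length as $l^*$ and is therefore again a diameter, and, the base being horizontal, its lower endpoint now lies on $\mathcal{B}$. This proves existence; the degenerate horizontal case, where the whole segment drops onto $\mathcal{B}$, is covered by treating $\mathcal{B}$ itself as a candidate.

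For the algorithm I would combine this with Lemma~\ref{lem:longstick}. The translated diameter is still maximal---lengthening it would exceed the diameter---so by Lemma~\ref{lem:longstick} it is supported by two terrain vertices $u,w$, while one endpoint is pinned to $\mathcal{B}$. Hence the segment lies along the line through $u,w$: its base endpoint is the crossing of line $uw$ with $\mathcal{B}$, and its opposite endpoint is the first point where this line meets the upper boundary on the other side. Every candidate is thus a full chord of $\mathcal{T}$ spanning from $\mathcal{B}$ to the upper boundary along a line through two terrain vertices---exactly the family obtained from the segments $S$ already introduced for the triangle algorithm, together with their prolongations (see Fig.~\ref{fig:def}). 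As recalled there, the maximal two-vertex-supported segments of $\mathcal{T}$ form a subset of the shortest-path-tree edges of the polygon and, via the linear-time construction of Guibas et al.~\cite{guibas1987linear}, the whole set $S$ and its prolongations to $\mathcal{B}$ can be enumerated in $O(n)$ time~\cite{cab}. For each of the $O(n)$ resulting chords both endpoints are available in constant time, so I compute its length and keep the maximum; I would also test separately the $O(n)$ convex-convex diagonals of case~(1) of Lemma~\ref{lem:longstick} and the candidates whose base endpoint is an extreme terrain vertex already lying on $\mathcal{B}$. Reporting the longest chord over all candidates yields the diameter in $O(n)$ total time.

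The main obstacle is the reduction in the preceding paragraph rather than any computation: I must argue that restricting attention to segments with an endpoint on $\mathcal{B}$ loses no generality (the translation step), and that maximality still forces two-vertex support after one endpoint has been constrained to $\mathcal{B}$, so that the candidate set is precisely the $O(n)$-size family that the earlier machinery of~\cite{cab} already computes in linear time. Once this correspondence is in place, the enumeration and the length evaluations are routine, and the linear bound follows immediately.
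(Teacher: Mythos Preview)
Your proposal is correct and follows essentially the same route as the paper: translate a diameter downward until an endpoint meets $\mathcal{B}$, invoke Lemma~\ref{lem:longstick} to conclude the translated segment is supported by two terrain vertices, and then observe that every such candidate belongs to the set $S$ of~\cite{cab}, which is enumerable in $O(n)$ time. Your extra safeguard of testing the convex--convex diagonals of case~(1) separately is redundant (and the asserted $O(n)$ bound on their number is not justified in general), but since the main argument already covers all cases this does not affect correctness.
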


	\begin{proof}
		From Lemma~\ref{lem:longstick} we know any maximal length line segment $l^*$ in $\mathcal{T}$ is supported by two vertices of $\mathcal{T}$. If $l^*$ does not have any vertex on $\mathcal{B}$, we can translate it downward until one of its vertices lies at $\mathcal{B}$, and then we either can improve the length or keep it unchanged if its blocked from further changes. Hence, the set $S$ discussed in Section~\ref{sec:exactalg} is a superset for all maximal length line segments in the terrain.  In Lemma 4 in~\cite{cab}, it is shown that the set $S$ which consists of the prolongations of all the maximal length line segments passing through two terrain vertices can be computed in $O(n)$ time. \qed
	\end{proof}

	\section{An FPTAS for $\Lambda^*_k$}

	We first discuss $k=3$. Any  triangle which has the diameter of the terrain as a side gives a 3-approximation to $\Lambda^*_3$. We use this simple observation to design our  FPTAS for Problems~\ref{prob:triangle},\ref{prob:kgon}. 
	We first explain the algorithm for $k=3$ and discuss the extension afterwards. 

	Let let $\Lambda$ be a $3$-approximation of $\Lambda_3^*$. Also, let $|\Lambda|$ denote the perimeter of $\Lambda$.  Consider a grid of big cells of side length $6 |\Lambda|$. 	
	Let the bottom left corner of a big cell in the grid lies at the leftmost vertex of $\mathcal{T}$, and let $(0,0)$ denote its coordinates. Consider three copies of this big cell with the bottom left corners at coordinates $(3|\Lambda|,0)$, $(0,3|\Lambda|)$ and  $(3|\Lambda|,3|\Lambda|)$, respectively.  Since $\mathcal{T}$ is monotone, the diameter of the terrain at least equals the length of $\mathcal{B}$, and the longest vertical line segment inside the terrain is smaller than the diameter of the terrain,  the union of the four big cells covers the terrain entirely.

	\begin{lemma} \label{lem:bigcell}
		$\Lambda^*_3$ is contained in one of the big cells entirely. 
	\end{lemma}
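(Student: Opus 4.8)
The plan is to show that $\Lambda^*_3$ is so small relative to the cell size that the half‑cell overlap built into the four shifted big cells is enough to swallow it whole. Concretely, I would first bound the \emph{diameter} of $\Lambda^*_3$ (its largest pairwise vertex distance), and then argue that any planar set of that bounded diameter which lies inside the region covered by the four cells must be contained in at least one of them.

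First I would bound the diameter. Since $\Lambda$ is a $3$-approximation of $\Lambda^*_3$, we have $|\Lambda^*_3| \le 3|\Lambda|$. For a triangle with side lengths $a \le b \le c$, the triangle inequality gives $c < a+b$, hence $2c < a+b+c$, so the longest side satisfies $c < \tfrac12 |\Lambda^*_3| \le \tfrac32 |\Lambda|$. The diameter of a triangle equals its longest side, so every pairwise distance between vertices of $\Lambda^*_3$ is strictly less than $\tfrac32 |\Lambda| < 3|\Lambda|$. Consequently the axis-aligned bounding box of $\Lambda^*_3$ has width and height each strictly less than $3|\Lambda|$, i.e. at most half the side length $6|\Lambda|$ of a big cell.

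Second I would exploit the offset structure of the four cells. Their bottom-left corners are $(0,0)$, $(3|\Lambda|,0)$, $(0,3|\Lambda|)$ and $(3|\Lambda|,3|\Lambda|)$, so each corner coordinate is chosen from $\{0,3|\Lambda|\}$ and consecutive cells overlap in strips of width $3|\Lambda|$. Since the union of the four cells covers $\mathcal{T}$ (established just before the lemma) and $\Lambda^*_3 \subseteq \mathcal{T}$, the bounding box $[x_1,x_2]\times[y_1,y_2]$ of $\Lambda^*_3$ lies in $[0,9|\Lambda|]^2$. I would then select the correct cell by a short, independent case analysis on each axis: on the $x$-axis, if $x_1 \le 3|\Lambda|$ take left boundary $a=0$ (then $x_2 < x_1 + 3|\Lambda| \le 6|\Lambda|$), otherwise take $a = 3|\Lambda|$ (then $a \le x_1$ and $x_2 \le 9|\Lambda| = a + 6|\Lambda|$); the symmetric choice fixes the vertical boundary $b \in \{0,3|\Lambda|\}$. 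The big cell with bottom-left corner $(a,b)$ is one of the four listed cells and contains the bounding box, hence contains $\Lambda^*_3$.

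The main obstacle is the first step: converting the $3$-approximation guarantee, which is a statement about \emph{perimeter}, into a bound on the geometric diameter of $\Lambda^*_3$. Once the ``longest side is less than half the perimeter'' observation is in place, the bounding-box dimensions drop below half the cell size, and the remainder is the routine shifted-grid case analysis, which only uses the fact that the shift $3|\Lambda|$ equals half the cell side $6|\Lambda|$.
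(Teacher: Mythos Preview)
Your argument is correct. Both you and the paper aim to show that $\Lambda^*_3$ has small enough extent compared to the cell side $6|\Lambda|$, but the mechanisms differ. The paper's proof appeals to the fact that a triangle has at most one obtuse angle and further asserts that $\Lambda^*_3$ has no obtuse angle (implicitly invoking Lemma~\ref{lem:angpstar}), then argues informally about the ``fraction of the width'' occupied. Your route is more elementary and more robust: you use only the triangle inequality to get that the longest side of $\Lambda^*_3$ is below $\tfrac12|\Lambda^*_3|\le\tfrac32|\Lambda|$, hence the bounding box fits in a $3|\Lambda|\times 3|\Lambda|$ square, and you then carry out the standard shifted-grid case split explicitly. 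This avoids any reliance on angle bounds for $\Lambda^*_3$ (which in the paper are only established when the base of $\Lambda^*_3$ lies on $\mathcal{B}$), and it makes transparent exactly why the half-cell shift of $3|\Lambda|$ suffices. The paper's version is shorter but leans on a structural claim about $\Lambda^*_3$; yours is self-contained and would generalize verbatim to the $k$-gon setting once the perimeter bound $|\Lambda^*_k|\le k|\Lambda|$ is in hand.
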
	
	\begin{proof}
		The side length of a big cell, say $C$ is $6 |\Lambda|$. Since any triangle has at most one obtuse angle, in the worst case, two consecutive edges of $\Lambda$ fulfils at most a 1/9 fraction of the width of $C$. Hence, any feasible solution to $\Lambda$ entirely lies within $C$. On the other hand, $\Lambda$ gives a 1/3 approximation for $\Lambda^*_3$, and $\Lambda^*_3$ does not have any obtuse angle, but any of its sides is at most trice of a side of $\Lambda$. Hence, $\Lambda^*_3$ also entirely lies within a big cell.  \qed
	\end{proof}

	From the construction of the grid, each edge of $\mathcal{T}$ intersects at most all the four big cells. 
	We  decompose each big cell by $O(\frac 1 \epsilon)$ finer cells of side length $\epsilon |P|$, for a given $\epsilon>0$.  Let $X$ denote the set of finer cells. Observe that the total complexity of the intersection of $X$ and $\mathcal{T}$ is in $O(n)$ since each edge of $\mathcal{T}$ is intersecting with at most four big cells. So computing the set $X' \subseteq X$ of finer cells that are intersecting with $\mathcal{T}$ takes $O(n)$ time. 
	We consider any of the four big cells independently. 
	
	Let $c_i,c_j$ and $c_k$ be any triple of finer cells in $X'$, all in a specific big cell $C$. 
	For two line segments $s^i_p \subset c_i$ and $s^j_q \subset c_j$, we consider the intersection of the visibility polygons by assuming an edge guard at $s^i_p$ and another one at $s^j_q$. 
	However, we do not need to compute the visibility polygons explicitly. 
	Let $V^{ji}_q$ 
	denote the ranges on the corresponding side of $c_i$ that are visible to $s^j_q$.  
	We are interested in determining whether there are three line segments in $\mathcal{T} \cap C$ whose endpoints lie at the intersection of the pairwise visibility ranges of the segments $s^i_p$, $s^j_q$ and $s^k_r \subset c_k$. This happens when  the intersection of the pairwise visibility ranges of $s^i_p$, $s^j_q$ and $s^k_r$ have a non-empty intersection, i.e., $V^{ij}_p \cap V^{kj}_q\ne \emptyset$, $V^{ji}_q \cap V^{ki}_r\ne \emptyset$  and $V^{ik}_p \cap V^{jk}_r\ne \emptyset$,  as illustrated in Fig.~\ref{fig:cells}.

	\begin{figure}[t]
		\centering
		\includegraphics[scale=0.8]{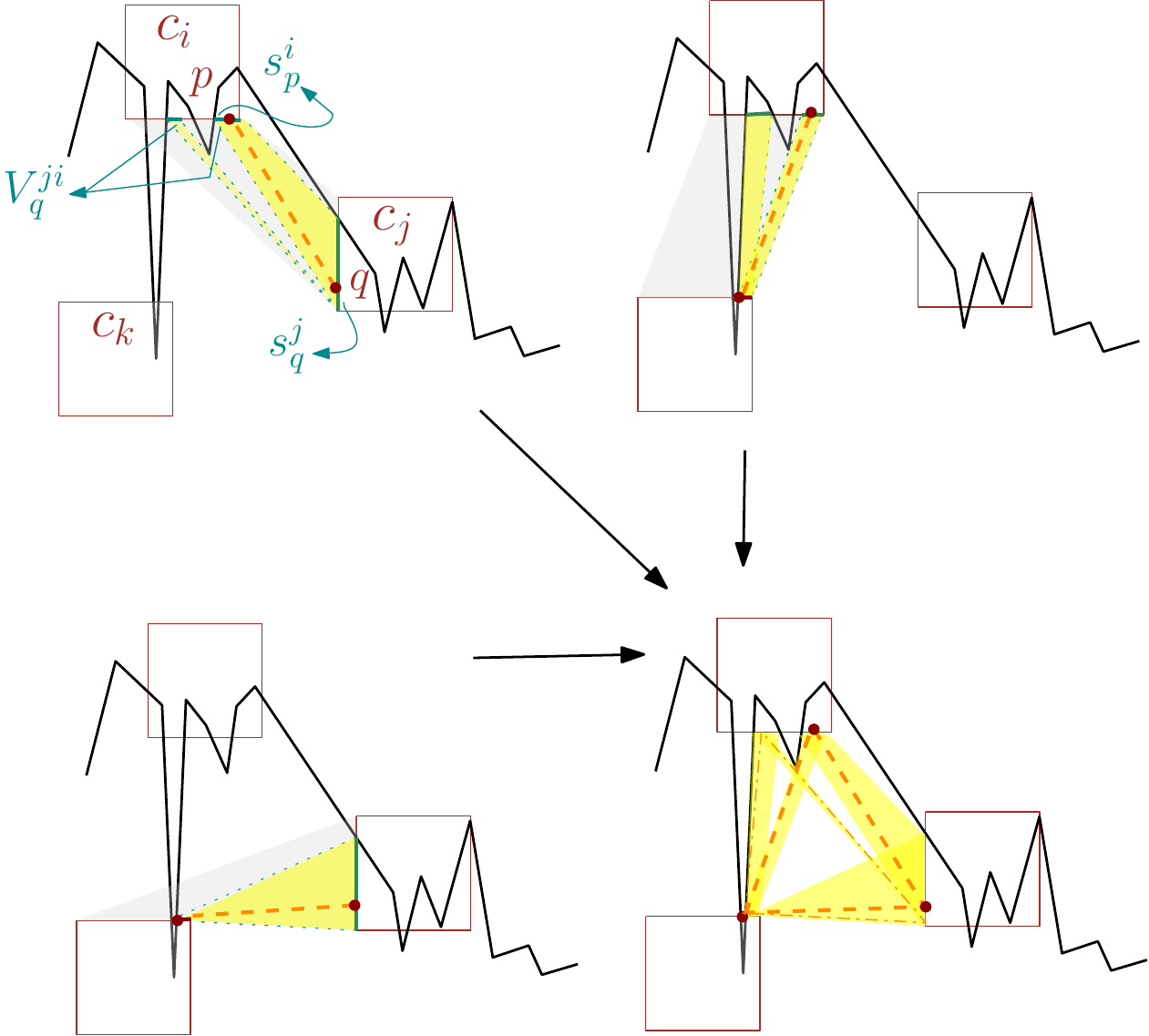}
		\caption{Computing the visibility ranges on a triple of finer cells that are introducing a candidate triangle for $\Lambda^*_3$ (its sides denoted by dash). The visibility ranges are shown in yellow. 
			This triple of segments has a non-empty intersection for the pairwise visibility ranges of the segments. Here, there also exists another combinatorially different triangle (in dash-dotted) on these three finer cells.   }
		\label{fig:cells}
	\end{figure}
	
	\subsection{Preparation for Using the Data Structures} \label{app:ds}
	
	The visible parts of a given segment $s$ from any other segment, as well as the visible parts of every
	segment from $s$, can be computed in linear time and space using the shortest
	path algorithm~\cite{guibas1987linear}. This imposes an extra $O(n^2)$ time in the running time of the algorithm. Next, we try to improve this running time using the technique presented  in~\cite{hall2006finding} for computing the visibility between intervals on the edges of a simple polygon.   Here we recall the main technique of the presented algorithm in~\cite{hall2006finding} for the self-completeness of our algorithm.
	
	Let $c_i,c_j$ and $c_k$ be three finer cells in a big cell. 
	For any three distinct sides of the cells  $c_i,c_j$ and $c_k$, we first make the corresponding visibility ranges, i.e., the maximal length line segments on a side of a finer cell that lies inside the terrain entirely. 
	We refer to each visibility range as an {\em interval}. Let $s^i_1,\ldots,s^i_{n_i}$, $s^j_1,\ldots,s^j_{n_j}$ and $s^k_1,\ldots,s^k_{n_k}$ denote the sequence of the intervals on $c_i$, $c_j$ and $c_k$, respectively, such  that  they lie within $\mathcal{T}$. 
	
	Let $V^{ij}_p$ be the constructed {\em visibility range} for the interval $s^{i}_p$, by considering an edge guard on it, with respect to the intervals on $c_j$.  
	If one or both endpoints of $V^{ij}_p$ lie in any visibility ranges of $c_j$,    $s^{i}_p$ sees any of these intervals. This can be determined in logarithmic time. 
	
	We can represent $V^{ij}_p$ by the pair of indices corresponding to the 
	intervals that are contained in it. 
	Then we can do range searching queries and 1D windowing queries to determine the intervals which have a non-empty intersection between their visibility ranges. 

	For each pair of $s^i_p$ and $s^j_q$, for efficiently determining whether there is any segment that lies within $\mathcal{T}$ entirely, and with endpoints at these intervals, we use a combination of two data structures. 
	
	For two points $p \in s^i_p$ and $q \in s^j_q$, the line segment $\overline{pq}$ must lie in the visibility range of both $s^i_p$ and $s^j_q$. The visibility range of any of $s^i_p$ and $s^j_q$ can be computed in $O(\log n)$ time by performing shortest path queries on preprocessed connected components achieved in $O(n)$ time~\cite{guibas1989optimal}, and checking whether they can see each other or not can be done at the same time by simple visibility queries inside simple polygons. 
	We first make a 1D range tree at the indices of the intervals in $c_i$ in $O(n_i \log n_i)$ time, and of the height $O(\log n_i)$. Then we make an interval tree at each node of the constructed range tree. Thus our data structure has a size $O(n_i \log n_i)$ and takes $O(n_i \log^2 n_i)$ time (spending $O(n_i \log n_i)$ time at each level of the range tree). 
	The queries are is there any interval $s^i_p$ in $c_i$ which is visible from an interval $s^j_q$ in $c_j$. This can be answered in $O(\log n_i)$ time. Performing $n_j$ queries like this gives us the running time $O((n_i+n_j) \log^2 n_i)$ for each pair of cells in a big cell. Observe that  $n_i,n_j \in O(n)$. 
	
	\begin{lemma}
		For each pair of finer cells, computing all pairwise visible intervals takes $O(n \log^2 n)$ time~\cite{hall2006finding}. 
		\end{lemma}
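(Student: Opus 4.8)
The plan is to exhibit a two-level search structure built on the intervals of one cell, together with a logarithmic-time visibility oracle for the terrain, so that each of the $n_j$ intervals on the other cell can be processed with a single logarithmic query; summing over the queries and adding the preprocessing then yields the claimed bound.

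First I would preprocess $\mathcal{T}$ once for shortest-path and visibility queries using the structure of Guibas and Hershberger~\cite{guibas1989optimal}, costing $O(n)$ time and space and answering each shortest-path / mutual-visibility query between two query points in $O(\log n)$ time. This oracle lets me compute, for any interval $s^j_q$ on $c_j$, its visibility range $V^{ji}_q$ — the maximal subinterval of the relevant side of $c_i$ whose points are visible from $s^j_q$ — in $O(\log n)$ time, and symmetrically for the $V^{ij}_p$. Next I would build the search structure on $c_i$: a 1D range tree on the ordered indices of $s^i_1,\ldots,s^i_{n_i}$, of height $O(\log n_i)$ and built in $O(n_i \log n_i)$ time, augmented at each node by an interval tree over that node's canonical subset. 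Since each interval is stored at $O(\log n_i)$ levels and an interval tree on $m$ intervals is built in $O(m\log m)$ time, the total construction time is $O(n_i \log^2 n_i)$ and the total space is $O(n_i \log n_i)$. Representing each $V^{ij}_p$ by the pair of indices of the extreme intervals of $c_j$ it spans reduces the question ``which intervals on $c_j$ are visible from $s^i_p$'' to a range/stabbing query answered in $O(\log n_i)$ time.

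Finally, for each of the $n_j$ intervals $s^j_q$ on $c_j$ I compute $V^{ji}_q$ with the oracle and issue one query against the augmented tree to decide, and report in the compact index-range form, which intervals of $c_i$ are mutually visible with $s^j_q$. The $n_j$ queries cost $O(n_j \log n_i)$, which together with the $O(n_i \log^2 n_i)$ preprocessing is $O((n_i + n_j)\log^2 n_i)$. Because both $n_i$ and $n_j$ are $O(n)$, this is $O(n \log^2 n)$, as required.

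The main obstacle I anticipate is correctness rather than timing. I must argue that mutual visibility between a point of $s^i_p$ and a point of $s^j_q$ — that the connecting segment lies entirely inside $\mathcal{T}$ — is faithfully captured by the nonemptiness of $V^{ij}_p \cap (\text{index span of } s^j_q)$, and symmetrically via $V^{ji}_q$. This hinges on two facts: that each visibility range is a contiguous subinterval of a cell side (so the index-pair encoding is well defined and the reporting output stays compact rather than blowing up to $\Theta(n_i n_j)$), and that the oracle of~\cite{guibas1989optimal} really decides segment-inside-$\mathcal{T}$ queries in $O(\log n)$ time. Once contiguity of the visibility ranges and the $O(\log n)$ query time are established, the reduction to the range/interval-tree queries is routine, and the time analysis above goes through verbatim as in~\cite{hall2006finding}.
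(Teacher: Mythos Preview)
Your proposal is correct and matches the paper's approach essentially line for line: the paper's justification (given in the paragraphs immediately preceding the lemma rather than as a separate proof) also preprocesses the terrain with the Guibas--Hershberger structure for $O(\log n)$ visibility-range queries, builds a 1D range tree on the $n_i$ interval indices with an interval tree at each node in $O(n_i\log^2 n_i)$ time, and then answers the $n_j$ queries from the other cell in $O(\log n_i)$ each, for a total of $O((n_i+n_j)\log^2 n_i)=O(n\log^2 n)$. Your added remark that correctness hinges on the contiguity of each visibility range (so the index-pair encoding is well defined) is a point the paper leaves implicit via the citation to~\cite{hall2006finding}.
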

	
	There are $O(\epsilon^{-2})$ cells, and we consider any triple of cells. 
	For any triple of intervals that the intersection of the pairwise visibility regions is non-empty, we compute the largest perimeter triangle as below.

	
	\subsection{A Largest perimeter triangle on  three intervals}
	
	For any three points on the boundary of terrain that define a triangle that lies inside the terrain entirely, it is necessary
	and sufficient that they are pairwise visible. 
	Now we have three intervals such that they are pairwise visible, and we seek the largest perimeter triangle having its vertices on these intervals. 
	
	Any triple $s^i_p,s^j_q,s^k_r$ of segments that may contain the vertices of $\Lambda^*_3$ have a horizontal or vertical direction. 
	Suppose the locations of two points  $p,q$ on their segments, say $s^i_p,s^j_q$  are fixed, and the position of $r \in s^k_r$ is a variable. 
	The function that
	describes the longest perimeter changes as a symmetric hyperbolic function with a unique minimum. Thus there always exists one direction in which a point can be moved such that the size has
	not decreased. In the following, we prove it formally.

	\begin{figure}[t]
		\centering
		\includegraphics[scale=0.7]{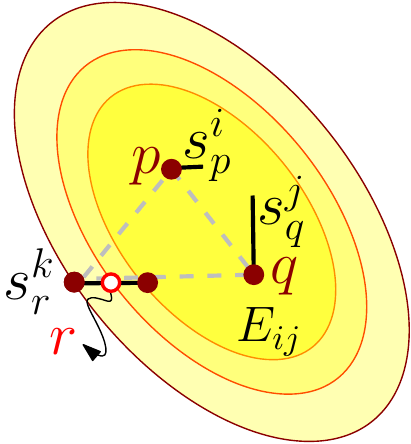}
		\caption{ $E_{ij}$ is the constructed ellipse at $\Lambda^*_3=\bigtriangleup pqt$. If there is at least one line segment $s^k_r$ that does not contribute an endpoint to $\Lambda^*_3$, at least one of the two ellipses which are constructed at $p,q$ and one of the endpoints of $s^k_r$ contains $E_{ij}$ entirely. (Note that two of the ellipses coincide if $s^k_r$ is parallel to $\overline{pq}$).  }
		\label{fig:ellips}
	\end{figure}

	\begin{lemma} \label{lem:ellipse}
		For any triple $s^i_p,s^j_q$ and $s^k_r$ of segments, the largest perimeter triangle $\Lambda^*_3$ with one vertex at each segment would have its vertices at the endpoints of the segments. 
	\end{lemma}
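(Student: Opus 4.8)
The plan is to reduce the statement to a single convexity fact. I would parametrize each of the three segments affinely: write $p(s)=a_i+s(b_i-a_i)$ for $s\in[0,1]$, where $a_i,b_i$ are the two endpoints of $s^i_p$, and likewise $q(t)$ on $s^j_q$ and $r(u)$ on $s^k_r$. Under this parametrization the perimeter becomes a function on the box $[0,1]^3$, namely
\[
P(s,t,u)=\lvert p(s)-q(t)\rvert+\lvert q(t)-r(u)\rvert+\lvert r(u)-p(s)\rvert .
\]
The goal then is simply to show that the maximum of $P$ over this box is attained at a corner of the box, since the corners are exactly the $2^3$ configurations in which $p$, $q$, and $r$ each sit at an endpoint of its segment.

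First I would observe that each of the three summands is convex on $[0,1]^3$: the difference $p(s)-q(t)$ is an affine function of $(s,t,u)$, the Euclidean norm is convex, and a convex function composed with an affine map is convex; the same applies to the other two terms. Hence $P$, being a sum of convex functions, is convex on $[0,1]^3$. I would then invoke the standard fact that a convex function on a compact convex polytope attains its maximum at a vertex. The vertices of $[0,1]^3$ are precisely the points with $s,t,u\in\{0,1\}$, each of which places all three triangle vertices at segment endpoints, which is exactly the claim.

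To connect this with Fig.~\ref{fig:ellips}, I would record the coordinate-wise geometric reading. Fixing $p$ and $q$, the map $u\mapsto\lvert p-r(u)\rvert+\lvert q-r(u)\rvert$ is a sum of two distances from a moving point on a line to the fixed foci $p,q$, hence convex in $u$; its level sets are the confocal ellipses $E_{ij}$ with foci $p,q$. Convexity along the chord $s^k_r$ means the largest such ellipse meeting the segment must touch it at an endpoint, i.e.\ one of the two ellipses through the endpoints of $s^k_r$ contains $E_{ij}$ and gives at least as large a perimeter. Permuting the roles of the three segments shows that pushing each free vertex to an endpoint never decreases the perimeter, which is the same conclusion obtained globally above.

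The one place needing care, and which I would treat as the main obstacle, is the degenerate configurations. When $s^k_r$ is parallel to $\overline{pq}$ the two extreme ellipses coincide (as the caption notes), and when two chosen vertices collide the triangle degenerates; in either case the maximizer need not be unique. This does not threaten the argument, since convexity still forces \emph{some} maximizer to a vertex of the box, but I would phrase the conclusion so that ``largest perimeter triangle'' is understood as an optimum attained at endpoints, allowing non-unique or degenerate maximizers rather than asserting a unique endpoint triple.
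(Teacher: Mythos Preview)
Your argument is correct and rests on the same underlying fact as the paper's: the perimeter is convex as a function of each vertex position, so its maximum over each segment is attained at an endpoint. The paper phrases this geometrically via confocal ellipses --- fixing two vertices $p,q$ as foci, the level sets of $\lvert pr\rvert+\lvert qr\rvert$ are ellipses, and since $r$ is assumed interior to $s^k_r$ one endpoint of $s^k_r$ lies on a strictly larger ellipse --- and then argues one coordinate at a time by contradiction. Your formulation packages the same idea globally: $P$ is convex on the cube $[0,1]^3$ as a sum of norms of affine maps, hence attains its maximum at a vertex of the cube. This is a bit cleaner, since it treats all three coordinates at once and sidesteps the paper's ``w.l.o.g.\ $p,q$ already lie at endpoints'' reduction; it also makes the degenerate cases (non-unique maxima, collinear triples) transparent, exactly as you note. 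Conversely, the paper's ellipse picture is what gives Fig.~\ref{fig:ellips} its meaning and is more immediately visual; your second paragraph already records that the two viewpoints coincide.
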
	
	
	\begin{proof}
		Suppose, by contradiction, that $\Lambda^*_3=\bigtriangleup pqr$ has at least one vertex that is not at an endpoint. 
		W.l.o.g, let  $p\in s^i_p$ and $ q \in s^j_q$ be the vertices of $\Lambda^*_3$ that lie at the endpoints. Consider the ellipse $E_{ij}$ which is passing through $r$ and with  $p$ and $q$ as its foci, see Fig.~\ref{fig:ellips}.  
		Consider the two (possibly one) ellipses that are passing through the endpoints of the line segment $s^k_r$ with the same foci as $E_{ij}$.
		Then since $r$ is an interior point of $s^k_r$, $E_{ij}$  entirely lies within one of the constructed ellipses. Because of this,  
		moving 
		$r$ along $s^k_r$ to at least one of the endpoints of $s^k_r$ would strictly increase the perimeter of $\Lambda^*_3$; contradiction. \qed
	\end{proof}

	Thus, a $(1-\epsilon)$-approximation of the largest perimeter triangle in $\mathcal{T}$ can be computed in $O(\epsilon^{-6}n\log^2n)$ time, however, this is not efficient as the problem has an $O(n \log n)$ time exact algorithm. But this algorithm can be generalized for computing largest area/perimeter polygons of $k>3$ vertices. 
	\section{Extension to Convex Polygons of at most $k$ Vertices} \label{sec:pkgon}
	
	The extension of the presented algorithm to convex polygons of $k>3$ vertices is straightforward. We consider any set of $k$ intervals in $k$ finer cells instead of three intervals, where all the $k$ intervals lie inside the terrain and are pairwise visible. 
	
		The non-empty intersection condition is necessary again to find a largest perimeter {\em convex} polygon of $k$ vertices that entirely lies inside $\mathcal{T}$, however, in the algorithm, we may report a convex polygon of less than $k$  vertices if it has a larger area/perimeter than any convex $k$-gon with vertices at the selected $k$~intervals. 
	It remained to find the optimal placements of the points on each set of $k$ segments that admits a non-empty set for the intersection of the pairwise visibility ranges.
	
	
		\subsection{Perimeter Measure}  \label{sec:akgon}

	L\"offler and van Kreveld have shown that if there is a set of $O(n)$ squares or line segments of two directions, the maximum perimeter convex polygon that is intersecting all of them and has at most one vertex at any of them can be computed in $O(n^{10})$ time~\cite{loffler2010largest}, and the optimal solution always chooses the endpoints of the segments. 
	In our problem, the intervals have horizontal or vertical directions since they lie at the sides of the finer cells. 
	However, in the algorithm of~\cite{loffler2010largest}, it is necessary for the segments to be disjoint. On the other hand, we do not have the restriction of choosing exactly one point from each interval. But we can adjust our intervals to satisfy the input conditions of the algorithm presented in~\cite{loffler2010largest}. 
	
	We need to allow an interval to contribute to both of its endpoints. 
	Hence, for each interval, we send two tiny length sub-intervals to the algorithm, such that each of which is contained in the original interval and contains exactly one of the endpoints of the original interval. This way, a horizontal line segment in our problem is transformed into two tiny length horizontal line segments. With a careful selection of the length of the tiny intervals which should be a fraction of $\epsilon$ (to keep the approximation factor unchanged), the tiny intervals would not intersect each other and the selection of each of their endpoints does not change the combinatorial structure of the resulting polygon. Hence, from now on, we work with the tiny intervals instead of the original intervals. 
	
	We note that we shortened the intervals because we know from~\cite{loffler2010largest} that only the endpoints of the tiny length intervals contribute to the optimal solution.  
	Also notice that we cannot simply consider the endpoints (degenerate segments) since we miss the pairwise visibility information by only considering the points. 

	Any convex at most $k$-gon $\Lambda$  which has the diameter of the terrain as an edge gives a $1/k$ approximation for the largest perimeter convex $k$-gon in the terrain. Hence, we would have four big cells of side length $2k|\Lambda|$ to cover the terrain~\footnote{The extension of the Lemma~\ref{lem:bigcell} to this case follows from the convexity of the $k$-gon and is straightforward.}. 
	So we use the presented dynamic program algorithm in~\cite{loffler2010largest}.
	We just need to ensure the next vertex from the current tiny interval is visible from the current vertex. A natural correction is we keep a refined list of tiny intervals for each of the tiny intervals, where only visible tiny intervals are included in the refined list of each tiny interval. This increases the space complexity by $O(n)$, but the running time remains unchanged.

	Hence, we can use the presented algorithm in~\cite{loffler2010largest} for computing a largest perimeter polygon of at most $k$ vertices with vertices on $k$ intervals (see \cite{loffler2010largest} for details). Notice that the constructed polygon has at most $2k$ vertices, but the adjustment for having at most $k$ vertices is straightforward. 
	Consequently, we achieve an $O(k^{10}\epsilon^{-2k}n \log ^2 n)$ time algorithm that approximates the largest perimeter  convex polygon of at most $k$ vertices within a factor $(1-\epsilon)$. 
	
	
	\begin{theorem} \label{thm:kgon}
		One can compute a $(1-\epsilon)$-approximation of the largest perimeter $k$-gon in a 1.5D terrain $\mathcal{T}$ of $n$ vertices in $O(k^{10}\epsilon^{-2k}n \log ^2 n)$ time and $O(n^2)$ space. 
	\end{theorem}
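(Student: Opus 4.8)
The plan is to assemble the theorem from the four ingredients developed above: the localization of the optimum into a single big cell, the finer-cell discretization, the interval-visibility data structure of Section~\ref{app:ds}, and the segment-stabbing algorithm of L\"offler and van Kreveld. First I would fix a convex at-most-$k$-gon $\Lambda$ having the terrain diameter as a side; by Theorem~\ref{thm:diameteroft} this is computable in $O(n)$ time, and since the perimeter of any convex $k$-gon in $\mathcal{T}$ is a sum of at most $k$ edges, each of length at most the terrain diameter $\le|\Lambda|$, the polygon $\Lambda$ is a $1/k$-approximation of $\Lambda^*_k$. Consequently every inter-vertex distance of $\Lambda^*_k$ is at most its half-perimeter $\tfrac{k}{2}|\Lambda|$, so the analogue of Lemma~\ref{lem:bigcell} (invoking convexity in place of the ``at most one obtuse angle'' argument) places $\Lambda^*_k$ inside one of the four big cells of side $2k|\Lambda|$ that cover $\mathcal{T}$.

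Next I would subdivide each big cell into finer cells of side $\epsilon|\Lambda|$ and restrict attention to the $O(\epsilon^{-2})$ finer cells meeting $\mathcal{T}$ (treating $k$ as the constant it is assumed to be). For each unordered $k$-subset of finer cells I want the largest-perimeter convex polygon of at most $k$ vertices with one vertex on a terrain interval in each chosen cell and with consecutive vertices mutually visible, so that the polygon lies inside $\mathcal{T}$. Here I would invoke the extension of Lemma~\ref{lem:ellipse}: at the optimum every free vertex sits at an endpoint of its interval, which lets me replace each interval by the two tiny subintervals around its endpoints described in Section~\ref{sec:akgon}. Choosing the tiny lengths as a small fraction of $\epsilon$ keeps them disjoint and leaves the combinatorial type of the polygon unchanged, so the input now satisfies the disjointness hypothesis of~\cite{loffler2010largest}.

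I would then run the dynamic program of~\cite{loffler2010largest} on the at most $2k$ tiny segments of the current subset, obtaining the maximum-perimeter convex polygon stabbing them in $O(k^{10})$ time; the containment-in-$\mathcal{T}$ requirement is enforced by consulting, for each tiny interval, the refined list of tiny intervals visible from it, precomputed with the $O(n\log^2 n)$-per-pair structure of Section~\ref{app:ds}. Reporting the best polygon over all subsets, and performing the straightforward collapse of the produced $\le 2k$-gon to at most $k$ vertices, yields the candidate $P$. Multiplying the $O(\epsilon^{-2k})$ subsets by the per-subset cost of $O(k^{10})$ for the dynamic program plus $O(\binom{k}{2}\, n\log^2 n)$ for the pairwise visibility gives the claimed $O(k^{10}\epsilon^{-2k}n\log^2 n)$ time, while storing the refined visibility lists for all $O(n)$ intervals accounts for the $O(n^2)$ space.

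The hard part will be the approximation analysis together with the correct marriage of the segment-stabbing algorithm to our constraints. For the guarantee I would take the true optimum $\Lambda^*_k$, clip each of its vertices into the finer cell containing it, and argue that each of its $k$ edges shortens by at most $O(\epsilon|\Lambda|)$ additively; since $|\Lambda^*_k|\ge|\Lambda|$, the total loss is $O(k\epsilon|\Lambda|)=O(k\epsilon|\Lambda^*_k|)$, and rescaling $\epsilon$ by a factor $\Theta(k)$ absorbs this into the desired multiplicative $(1-\epsilon)$ bound. The delicate points are verifying that the clipped vertices remain pairwise visible (so the clipped polygon is a legitimate candidate actually seen by the enumeration) and that~\cite{loffler2010largest}, designed for \emph{exactly one} vertex per segment and with no visibility constraint, still returns an optimum once we (i) permit a vertex per \emph{interval} rather than per original segment via the two-endpoint trick and (ii) prune transitions through the refined visibility lists; confirming that these modifications neither break optimality nor inflate the running time is the crux of the argument.
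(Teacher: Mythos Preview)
Your proposal is correct and follows essentially the same approach as the paper: the $1/k$-approximation via the terrain diameter, the four big cells of side $2k|\Lambda|$, the $O(\epsilon^{-2})$ finer cells, the tiny-interval reduction to the L\"offler--van~Kreveld dynamic program, and the refined visibility lists are exactly the ingredients the paper assembles, and your time and space accounting matches the paper's. In fact you are more careful than the paper on two points---the explicit $(1-\epsilon)$ clipping argument and the ``delicate points'' about visibility preservation and the correctness of the modified DP---which the paper leaves largely implicit.
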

	
	\begin{proof}
		Computing the largest perimeter polygon on $k$ intervals takes $O(k^{10})$ time. There are $O(\epsilon^{-2k})$ sets of $k$ intervals in $O(\epsilon^{-2})$ finer cells. Considering a pairwise visibility query takes $O(n \log^2 n)$ time. Hence, we have $O(k^{10}\epsilon^{-2k}n \log ^2 n)$ computations and we remember the largest computed polygon which has at most $k$ vertices. The space complexity of the algorithm of~\cite{loffler2010largest} is $O(n)$. Hence, in our problem, the space complexity is $O(n^2)$.  \qed
	\end{proof}
	
	
	\subsection{Area Measure}  \label{sec:akgon}
	In this section, we show that our algorithm is extendable to the area measure. We first prove that if two triangles have a $(1-\epsilon)$-approximation for the ratio of their corresponding sides, the ratio of their area is also within a factor $(1-\epsilon)$.  
	
	\begin{lemma}	Suppose the ratio between the length of corresponding edges of two triangles $\bigtriangleup abc$ and $\bigtriangleup a^*b^*c^*$  is  $(1-\epsilon)$. Then the area of $abc$ is within a factor $(1-\epsilon)$ of the area of $a^*b^*c^*$.
		
	\end{lemma}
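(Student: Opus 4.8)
The plan is to reduce the statement to a one-line consequence of similarity, made fully self-contained through Heron's formula. Write $x,y,z$ for the side lengths of $\bigtriangleup a^*b^*c^*$; by hypothesis the corresponding sides of $\bigtriangleup abc$ are $(1-\epsilon)x$, $(1-\epsilon)y$ and $(1-\epsilon)z$. Thus all three pairs of corresponding sides share the common ratio $(1-\epsilon)$, which is exactly the SSS criterion, so the two triangles are similar with linear scale $(1-\epsilon)$. First I would record the scaling of the quantities entering Heron's formula: the semiperimeter satisfies $s=\tfrac12\big((1-\epsilon)x+(1-\epsilon)y+(1-\epsilon)z\big)=(1-\epsilon)\,s^*$, and each Heron factor scales the same way, e.g. $s-(1-\epsilon)x=(1-\epsilon)(s^*-x)$, and symmetrically for $y$ and $z$.

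Substituting into $\mathscr{A}(abc)=\sqrt{s\,(s-(1-\epsilon)x)(s-(1-\epsilon)y)(s-(1-\epsilon)z)}$ pulls a common factor $(1-\epsilon)^4$ out from under the square root, so that
\[
\mathscr{A}(abc)=(1-\epsilon)^2\,\mathscr{A}(a^*b^*c^*).
\]
This is the entire content of the lemma: for similar figures the area is the square of the linear ratio, and the side-ratio hypothesis is precisely what guarantees similarity, so no further geometry is needed.

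The only genuine subtlety is cosmetic: the computation produces the factor $(1-\epsilon)^2$ rather than the advertised $(1-\epsilon)$. Since $0<\epsilon<1$ we have $(1-\epsilon)^2=1-2\epsilon+\epsilon^2\ge 1-2\epsilon$, so the area is approximated within a factor $1-2\epsilon$; to recover the clean $(1-\epsilon)$ guarantee used by the FPTAS I would simply run the side-length approximation with parameter $\epsilon/2$ in place of $\epsilon$, which affects the running time in Theorem~\ref{thm:kgon} only by a constant. I expect the point requiring the most care to be verifying that the hypothesis is the \emph{uniform} one, i.e. that the three corresponding sides are scaled by one common factor: if the clipping in the algorithm produced three distinct ratios $(1-\epsilon_i)$ the Heron factors would no longer share a single constant and the square root would not telescope, so one must either argue a common bound for all three ratios or absorb the discrepancy into the $\epsilon/2$ rescaling. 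Once the common-ratio hypothesis is in place, Heron's formula does the rest automatically.
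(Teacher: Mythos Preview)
Your proposal is correct and follows essentially the same approach as the paper: both arguments observe that the common side ratio forces similarity via SSS, and then conclude that the area ratio is $(1-\epsilon)^2$, which is absorbed as a $(1-\epsilon)$-approximation. The only cosmetic difference is that the paper appeals to the elementary base-times-height scaling for similar triangles, whereas you route the same computation through Heron's formula; your additional remarks on the $\epsilon/2$ rescaling and on the need for a \emph{common} ratio are sound but go slightly beyond what the paper records.
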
		
	
	\begin{proof}
		From the statement of the lemma, $\frac {ab}{a^*b^*}=\frac {bc}{b^*c^*}=\frac {ac}{a^*c^*}$. This implies that $\bigtriangleup abc$ and $\bigtriangleup a^*b^*c^*$ are similar. A basic theorem in the similarity of the triangles states that for two similar triangles with similarity ratio $(1-\epsilon)$, the corresponding heights is also within a factor $(1-\epsilon)$, and the ratio of the areas equals $(1-\epsilon)^2$, which is still a $(1-\epsilon)$ approximation. \qed \end{proof}
	
	\paragraph{Algorithm.}
	Similar to the largest-perimeter at most $k$-gon, we should find the optimal placements of the points on each set of $k$ intervals that admits a non-empty set for the intersection of all the pairwise visibility ranges. 
	In~\cite{loffler2010largest}, an $O(n^7)$ time algorithm is designed to compute the largest-area convex hull with vertices given from a set of disjoint axis-aligned squares of arbitrary size, where each square contributes at most one vertex. 
	The authors
	proved that only two corners of each square can contribute a vertex to the optimal solution, and they reduced their problem to the problem of finding the largest-area convex polygon on a set of line segments of four different orientations; horizontal, vertical, and diagonal, where the diagonal orientations come from the diagonal of the squares. 
	%
	In our problem, we only have two orientations for the intervals, so we can use the same algorithm to compute the largest area convex hull of at most $2k$ vertices, where the vertices lie at horizontal or vertical tiny intervals, as discussed in the previous section. This gives us an $O(k^{7} \epsilon^{-2k} n \log^2 n)$ time algorithm for having a $(1-\epsilon)$-approximation algorithm. 
	\begin{theorem}
		\label{thm:areatri}
		One can compute a $(1-\epsilon)$-approximation of the largest area $k$-gon in a terrain $\mathcal{T}$ of $n$ vertices in $O(k^{7}\epsilon^{-2k}n\log^2n)$ time and $O(n^2)$ space.
	\end{theorem}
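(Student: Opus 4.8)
The plan is to reuse, almost verbatim, the FPTAS scaffolding built for the perimeter measure in Theorem~\ref{thm:kgon}, changing only the subroutine that optimizes over a fixed tuple of $k$ intervals. First I would fix a constant-factor gauge $\Lambda$ by taking any convex at most-$k$-gon inside $\mathcal{T}$ having the diameter $l^*$ as an edge. Since every vertex of the optimal area $k$-gon lies in $\mathcal{T}$, every chord of the optimum has length at most $|l^*|\le|\Lambda|$, so its spatial diameter is at most $|\Lambda|$; hence the same four big cells of side length $2k|\Lambda|$ cover $\mathcal{T}$, and by the convexity argument behind Lemma~\ref{lem:bigcell} the optimum lies entirely inside one of them. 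I would then refine each big cell into $O(\epsilon^{-2})$ finer cells of side length $\epsilon|\Lambda|$, retain the $O(n)$ cells meeting $\mathcal{T}$, and enumerate all $O(\epsilon^{-2k})$ unordered $k$-tuples of finer cells.

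For each $k$-tuple I would construct the visibility intervals on the relevant cell sides and certify pairwise visibility with the $O(n\log^2 n)$ range-tree/interval-tree machinery of Section~\ref{app:ds}, exactly as in the perimeter case; this supplies the $n\log^2 n$ factor and the $O(n^2)$ space. The only genuinely new ingredient is the per-tuple optimization: in place of the perimeter routine I would invoke the largest-\emph{area} variant of L\"offler and van Kreveld~\cite{loffler2010largest}, which computes the maximum-area convex polygon with one vertex in each of a family of $m$ segments in $O(m^7)$ time, after reducing squares to segments of a bounded number of orientations. Following the reduction of Section~\ref{sec:pkgon}, I would feed it the two tiny sub-intervals per interval so that both endpoints may be selected, and I would carry a visibility-refined adjacency list so that consecutive chosen vertices are mutually visible. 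Since our intervals have only two orientations and there are $O(k)$ of them, this costs $O(k^7)$ per tuple, and multiplying by the $O(\epsilon^{-2k})$ tuples and the $O(n\log^2 n)$ visibility cost yields the claimed $O(k^{7}\epsilon^{-2k}n\log^2 n)$ time.

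The substantive step is the approximation guarantee, and this is where I expect the real work to lie. On a fixed $k$-tuple the L\"offler--van Kreveld routine returns the \emph{exact} maximum-area polygon with vertices on the given intervals, so it suffices to exhibit one feasible polygon of area at least $(1-\epsilon)$ times the optimum. I would select the $k$-tuple of finer cells containing the optimal vertices $v_1,\dots,v_k$, snap each $v_i$ to the nearest point of a visibility interval on a side of its cell (a displacement of at most the cell diagonal $\sqrt2\,\epsilon|\Lambda|$), and bound the area loss through the similarity lemma proved just above: triangulating the optimal $k$-gon from a fixed vertex into $k-2$ triangles and controlling each perturbed triangle through its edge-length ratios, then summing.

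The delicate point, which I regard as the main obstacle, is that the snapping error is \emph{additive and uniform} ($O(\epsilon|\Lambda|)$ per vertex) while the bound we need is \emph{multiplicative} in the area; the similarity argument is clean only when every edge of the optimum has length $\Omega(|\Lambda|)$. Handling very short edges---which contribute negligible area but suffer a large relative perturbation---is the crux: I would argue that collapsing such an edge produces a polygon of fewer than $k$ vertices whose area is essentially unchanged, which is admissible under the algorithm's convention (Section~\ref{sec:pkgon}) of reporting a polygon of \emph{at most} $k$ vertices. Finally, rescaling $\epsilon$ by the appropriate constant absorbs the $(1-\epsilon)^2$ slack from the similarity lemma and the $k$-dependent slack from triangulation, giving the stated $(1-\epsilon)$-approximation.
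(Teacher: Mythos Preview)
Your proposal is correct and follows essentially the same route as the paper: reuse the grid/visibility scaffolding of Theorem~\ref{thm:kgon} and swap the per-tuple subroutine for the $O(m^{7})$ largest-\emph{area} algorithm of L\"offler and van~Kreveld on the (tiny) two-orientation intervals, obtaining the stated $O(k^{7}\epsilon^{-2k}n\log^{2}n)$ bound. In fact you are more careful than the paper on the approximation guarantee: the paper only states the similarity lemma for a single triangle and leaves the extension to $k$-gons (triangulation from a vertex, and the additive-vs-multiplicative issue for short edges that you handle by collapsing to an at-most-$k$-gon) implicit.
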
	
	
	\section{Discussion}
	We discussed several algorithms for computing a largest area or perimeter polygon of bounded complexity $k$. For $k=2,3$, we discussed exact algorithms for the perimeter measure. Furthermore, for $k=3$ our running time is the same as the running time of the algorithm presented in~\cite{cab} for computing the largest area triangle. The efficiency of the algorithms for $k=3$ in both the area and the perimeter measures remained open. Also, the problem of finding a largest area/perimeter convex polygon of at most $k$ vertices for general $k$ remained open.  

	\bibliographystyle{abbrv}
	\bibliography{refs}
	
\end{document}